\documentclass[aps,nofootinbib,preprintnumbers,citeautoscript]{revtex4}
\usepackage{amsmath,amssymb}
\usepackage{enumerate}
\usepackage{graphicx}
\usepackage{amsthm}
 \usepackage{mathtools}
 \usepackage{textcomp}
  \usepackage{braket}

\textwidth 18cm
\textheight 24cm
\date{\today}

\begin{document}
\title{On the Power of Quantum Queue Automata in Real-time}

\author{Amandeep Singh Bhatia$^\ast$ and  Ajay Kumar$^\dagger$  \\
\textit{Department of Computer Science, Thapar university, India} \\
E-mail: $^\ast$amandeepbhatia.singh@gmail.com}

\begin{abstract}
This paper proposed a quantum analogue of classical queue automata by using the definition of the quantum Turing machine and quantum finite-state automata. However, quantum automata equipped with storage medium of a stack has been considered, but the concept of quantum queue automata has not been introduced so far. The classical Turing machines can be simulated by classical queue automata. Motivated by the efficiency of the quantum Turing machine and nature of classical queue automata, we have introduced the notion of quantum queue automata using unitary criteria. Our contributions are as follows. We have also introduced a generalization of real-time deterministic queue automata, the real-time quantum queue automata which work in real-time i.e. the input head can move towards the right direction only and takes exactly one step per input symbol. We have shown that real-time quantum queue automata is more superior than its real-time classical variants by using quantum transitions. We have proved the existence of the language that can be recognized by real-time quantum queue automata and cannot be recognized by real-time deterministic (reversible) queue automata. Further, we have shown that there is a language that can be recognized by real-time quantum queue automata but not by real-time non-deterministic queue automata. 
\end{abstract}
\maketitle



\theoremstyle{plain}

\newtheorem{thm}{Theorem}

\theoremstyle{definition}
\newtheorem{defn}{Definition}
\newtheorem{exmp}{Example}

\section{Introduction and motivation}
Quantum computing combines quantum physics, computer science, and mathematics by studying computational models based on quantum physics. Quantum computing is based on the quantum phenomena of entanglement and superposition to perform operations on quantum computers (which is substantially different from classical computers) \cite{1}. It may allow us to perform computational tasks which are neither possible nor efficient. In 1994, Shor \cite{2} designed a quantum algorithm for calculating the factor of a large number $\it n$ with space complexity $O(log ~n)$ and runs in $O((log ~n)^2 * log log ~n)$ on a quantum computer, and then perform $O(log ~n)$ post processing time on a classical computer, which could be applied for cracking the RSA algorithm at Bell Laboratories (US). In 1996, Grover \cite{3} designed an algorithm for searching an element in an unsorted database set of size \textit{n} in $\sqrt{n}$ operations approximately.

Soon after the design of Shor's algorithm, interest in quantum computation and information has been increased significantly. A number of different models, their language recognition capability and properties have been introduced \cite{6}. Till now, quantum versions for various classical automata's has been introduced such as quantum Turing machine (QTM) \cite{8} of Turing machine; quantum finite automata (Moore and Crutchfield \cite{4}; Kondacs and Watrous \cite{5}) of deterministic finite automata; quantum pushdown automata (Moore and Crutchfield \cite{4}; Golovkins \cite{9}; Gudder \cite{10}; Qiu \cite{11}) of pushdown automata, quantum real-time one-counter automaton (rtQ1CA) Say et al. \cite{28} of classical real-time one-counter automaton (rt1CA) and many more since last two decades. Some of these constructions are more powerful than their classical counterparts \cite{32}.

In 1936, the same fruitful year Turing introduced the Turing machine, Post \cite{12, 13, 15} proposed the concept of Post machine (PM), which is equivalent to Turing machine model in computation. Later on, Manna \cite{14} followed the approach of Post and defined a automaton equipped with data structure queue called Post machine. Till now, it has been extensively studied from several perspectives. Vollmar \cite{16} investigated the automaton employed with buffer storage, which is basically queue automaton and proved that it can be designed for recursive enumerable languages. In 1980, Brandenburg \cite{24} considered queue automaton (called post machine) and investigated its features. It has also characterized the class of languages recognized by multi-reset machines and equality sets. The computation power of queue automata by putting restriction on time has been investigated on several occasions. Cherubini et al. \cite{29} considered the queue automaton which works in quasi real-time i.e. limit the number of $\epsilon$-transitions by a constant and proved that emptiness problem is undecidable. In 2013, Jakobi et al. \cite{30} introduced queue automaton of constant length and studied its descriptional complexity. Recently, 
Kutrib et al. \cite{25, 26} introduced the concept of reversible queue automata. It has been shown that the class of languages recognized by reversible queue automata strictly consists regular languages. It has been investigated that the computational power of reversible queue automata and Turing machines is an equivalent. Further, the working of reversible queue automata in quasi real-time has been shown. Moreover, the language recognition power of reversible queue automata is compared with reversible pushdown automata and input-driven queue automata and examined their closure properties.

Scegulnaja \cite{31} introduced the concept of real-time quantum Turing machine and proved that a language $L=\{wcxcw^Rcx^R \mid w, x \in \{0, 1\}^*\}$ can be recognized by real-time quantum Turing machine with single work-tape, but cannot be recognized by real-time deterministic Turing machine with single work-tape. In this paper, we have introduced a notion of quantum queue automata. Further, we have shown that a language which cannot be recognized by any real-time deterministic queue automata can be recognized by quantum queue automata in real-time. Moreover, we have proved that real-time quantum queue automata is more powerful than real-time non-deterministic queue automata in language recognition. The organization of the paper is as follows: In section 2, some preliminaries and definitions are given. Further, we have introduced the concept of quantum queue automata with its well-formedness conditions in Section 3. Furthermore, the power of real-time quantum queue automata over classical counterparts is shown in Section 4, followed by a conclusion in Section 5. 
\subsection{Motivation}
\begin{itemize}
	\item It is known that deterministic finite automata equipped with queue can perform universal computations. In fact, it has been proved several times that the computational power of deterministic queue automata and Turing machine is equivalent.
	
	\item QTM is more efficient than classical Turing machine from a computational complexity point of view. For example, integer factorization and discrete logarithm problems are intractable on classical Turing machine, but they are tractable on QTM \cite{17}. Moreover, real-time QTM is more powerful than real-time DTM \cite{31}.
	\item	Motivated by the efficiency of QTM and classical queue automata, we investigate a quantum version of classical queue automata and its superiority over classical counterparts in real-time.
\end{itemize}
\section{Preliminaries and definitions}
In this section, we review some formal definitions. Throughout the paper, we have used the following notations: the prefix \textquotedblleft R\textquotedblright, \textquotedblleft D\textquotedblright, \textquotedblleft ND\textquotedblright, \textquotedblleft Q\textquotedblright~and  \textquotedblleft rt\textquotedblright ~signify reversible, deterministic, non-deterministic, quantum and real-time respectively. An input alphabet $\Sigma$ does not contains left and right-end markers (\#, \$), empty queue symbol $\perp$ is not an element of queue alphabet $\Sigma_q$. The length of input string \textit{x} is denoted by $|x|$. We assume that the reader is familiar with the notation of quantum computation; otherwise, reader can refer to \cite{6, 34} for quantum models. 

\begin{defn} 
	A deterministic queue automaton (DQA) is defined as a septuple ${\it (Q,\Sigma, \Gamma, \delta, q_0, \perp, F)}$, where 
	\begin{itemize}
		\item {\it Q} is a set of states, 
		\item {\it $\Sigma$} is an input alphabet,
		\item $\Gamma$ is a finite set of queue symbols,
		\item {\it $q_0$} is a starting state,
		\item $\perp$ is an empty queue symbol $\perp \neq \Gamma$,
		\item The transition function ${\it \delta}$ is defined by ${\it Q\times\Sigma \cup \{\lambda\}\times (\Gamma \times \Gamma) \cup (\{\perp\}\times \{\perp\}) \rightarrow Q \times \Gamma \cup \{\lambda\} \times \{\text{keep, remove}\}}$, where $\lambda$ signifies empty symbol. It must never be used as an input symbol. 
		\item ${\it F}$ is a set of accepting states ($F \subseteq Q$).
	\end{itemize}
	In DQA, it is possible to enqueue the symbol at the rear of queue and dequeue (remove) or keep the symbol at the front of the queue. In order to process the input string \textit{x} by $M_{DQA}= (Q,\Sigma, \Gamma, \delta,$ $q_0, \perp, F)$, we assume that \textit{x} is written on input tape employed with a queue. A computation process of $M_{DQA}$ is a sequence of configurations $c_0, c_1, c_2,...$ , where $c_0$ is an initial configuration. The configuration of $M_{DQA}$ is defined as a quadruple $(p, q , r, s)$, where $p \in \Sigma^*$ denotes the already read part of input string, $q \in Q$ is the present state, $r \in \Sigma^*$ is unread part of \textit{x} and \textit{s} indicates the content of queue, where leftmost symbol is at the front of queue. Suppose the configuration of $M_{DQA}$ is $c_1=(p, q_1, r_1r_2, z_1z_2...z_l)$, where $M_{DQA}$ is in state $q_1$ and R/W head is under the symbol $r_1$ and $z_1, z_l$ are the symbols at the front and read of the queue respectively, where $r_1 \in \Sigma \cup \{\lambda\}, p, r_2 \in \Sigma^{*}, z^{'} \in \Gamma \cup \{\lambda\}$  and $z_1,z_2,...z_l \in \Gamma ~\text{for}~ l \geq 1$. Thus, after reading the symbol $r_1$, the resultant configuration $c_2$ is as $c_2=(pr_1, q_2, r_2, z_1z_2...z_lz^{'})$. Thus, the above configuration $c_1$ is transformed into $c_2$ if the transition function is $\delta(q_1, r_1, z_1, z_l)= (q_2, z^{'}, \text{keep})$. Similarly, in order to remove the symbol from head and put the other symbol at rear, then the resultant configuration will be $c_2=(pr_1, q_2, r_2, z_2...z_lz^{'})$ if transition function is $\delta(q_2, r_3, z_1, z_l)= (q_3, z^{'}, \text{remove})$. In case, the queue is empty at initial stage, then the resultant configuration is computed by $\delta$ with empty queue symbol such as $\delta(q_1, r_1, \perp, \perp)= (q_2, z^{'}, \text{keep})$. Fig 1 shows the pictorial representation of above configurations of a DQA.
\end{defn}

\begin{figure*}
        \centering
                \includegraphics[scale=0.5]{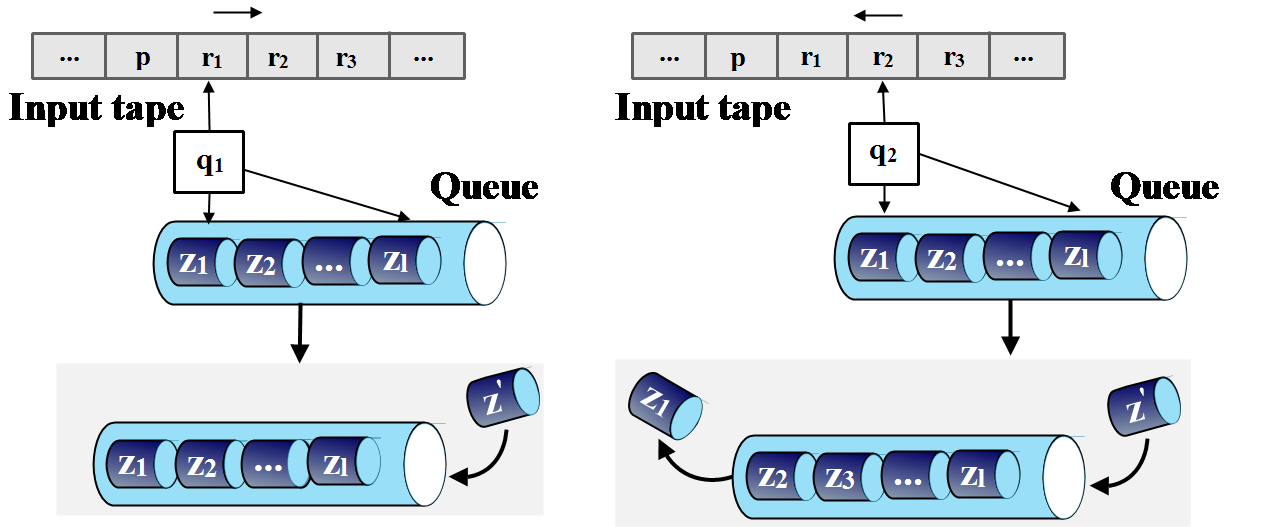}
                	\caption{Resultant configurations of a DQA} \label{1}
\end{figure*}

The definition of reversible DQA (RDQA) is same as above DQA. The only difference lies in the transitions i.e. any configuration of RDQA must have one configuration which can be computed by DQA. It has also to be deterministic backward and the symbols are enqueued at the front and dequeued from rear of the queue. Any automaton is said to be real-time if it complete its computation in real-time i.e the running time of automaton on any input string $x \in \Sigma^{*}$ is less than equal to $|x|$. But, the notion real-time depends upon the model of automata studied. Kutrib et al. \cite{25} studied the restricted versions of DQA by putting restriction on time. A RDQA is said to be real-time DQA (rtRDQA) if there are no $\lambda$-steps in computation. It is said to be quasi real-time if there is a constant number of $\lambda$-steps applicable for all computations.

\section{Quantum queue automata}
Quantum queue automata (QQA) is constructed similarly as quantum pushdown automata (QPDA). QQA employs a data structure \textit{queue} referred to as First-In, First-Out (FIFO), whereas QPDA employs stack which referred to as Last-In, First-Out (LIFO) \cite{11}. QQA consists  input tape, queue and finite state control. QQA can be defined as a modification of classical queue automata by adding weighted superposition to the configurations of classical queue automata in such a way that processing of input string corresponds a unitary transformation. QQA choose a transition by considering current state and the symbols at the front and end of the queue. 

\begin{defn}
	A quantum queue automaton $M_Q$ is defined as septuple $(Q, \Sigma, \Sigma_q, q_0, Q_{acc},$ $Q_{rej}, \delta)$, where
	\begin{itemize}
		\item $\it Q$ is a set of states. Moreover, $Q=Q_{acc}\cup Q_{rej}\cup Q_{non}$, where $Q_{acc}, Q_{rej}, Q_{non}$ represent the set of accepting, rejecting and non-halting states respectively. 
		\item $\Sigma$ is an input alphabet,
		\item $\Sigma_q$ is a queue alphabet such that $\Sigma_q= \Sigma \cup \{\#,\$\}$, where end-markers \{\#,\$\} are not included in $\Sigma$. For convenience, we have used $\Sigma_{\tau}~ \text{for}~ \Sigma_q \cup \{\tau\}$ such that $\tau$ represents an empty word and $\perp$ is the empty queue symbol such that ($\perp \notin \Sigma_q$),
		\item Transition function $\delta$ is defined by $\delta: Q\times\Sigma\times\Sigma_q \times \Sigma_q \times Q\times \Sigma_{\tau} \times D \times X\rightarrow C$, where $D=\{\leftarrow,\uparrow,\rightarrow\}$ represent the head function for the left, stationary and right direction of R/W head, $X \in \{\epsilon, \omega\}$ where $\epsilon, \omega$ represent the dequeue and enqueue operations respectively. In following conditions, we have used $z_1, z_2,...z_l \in \Sigma_q ~ \text{for}~ l \geq 1$, where $z_1, z_l$ signifies first and last symbol of queue respectively  Transition function must satisfy the following conditions:
	
	\begin{enumerate} [(a)]
	
			\item Local probability condition:
			\begin{equation}	
			\smashoperator[r]{\sum\limits_{(q \textasciiacute, z \char`\', d, \omega)\in Q\times \Sigma_{\tau} \times D \times X }^{\forall (q_1, \sigma, z_1, z_l) \in Q \times \Sigma \times \Sigma_q \times \Sigma_q}}  {\mid \delta(q_1, \sigma, z_1, z_l, q \char`\', z \char`\', d, \omega)\mid^2}=1
			\end{equation}
			\item Orthogonality condition: 
			\begin{equation}
			\smashoperator[r]{\sum_{(q\char`\', z\char`\', d, \omega) \in ~ Q\times \Sigma_{\tau} \times D \times X}^{\forall (q_1, \sigma, z_1, z_l) \neq (q_2, \sigma, z_1, z_l)~ \text{in} ~ Q \times \Sigma \times \Sigma_q \times \Sigma_q }} {\overline{\delta(q_1, \sigma, z_1, z_l, q\char`\', z\char`\', d, \omega)}~ \delta(q_2, \sigma, z_1, z_l, q\char`\', z\char`\', d, \omega)}=0
			\end{equation}
			\item Separability condition I: 
			\begin{itemize}
				\item	
				\begin{equation}
				\begin{split}
				& \smashoperator{\sum\limits_{\substack{(q\char`\', z\char`\', d, \epsilon)\in Q\times \Sigma_{\tau} \times D \times X, \\ (q\char`\', z\char`\', d, \omega)\in Q\times \Sigma_{\tau} \times D \times X}}^{\forall (q_1, \sigma_1, z_1, z_l), (q_2, \sigma_1, z_1, z_l) \in Q \times \Sigma \times \Sigma_q \times \Sigma_q}} {\overline{\delta(q_1, \sigma_1, z_1, z_l, q\char`\', z\char`\', d, \epsilon)}~ \delta(q_2, \sigma_1, z_1, z_l, q\char`\', z\char`\', d, \omega)} \\
				& + \smashoperator{\sum\limits_{(q\char`\', z\char`\', d, \epsilon)\in Q\times \Sigma_{\tau} \times D \times X}}{\overline{\delta(q_1, \sigma_1, z_1, z_l, q\char`\', z\char`\', d, \epsilon)}~ \delta(q_2, \sigma_1, z_1, z_l, q\char`\', z\char`\', d, \epsilon)}=0
				\end{split}
				\end{equation}
				\item  
				\begin{equation}
				\smashoperator{\sum\limits_{\substack{(q\char`\', z\char`\', d, \epsilon)\in Q\times \Sigma_{\tau} \times D \times X, \\ (q\char`\', z\char`\', d, \omega)\in Q\times \Sigma_{\tau} \times D \times X}}}{\overline{\delta(q_1, \sigma_1, z_1, z_l, q\char`\', z\char`\', d, \epsilon)}~ \delta(q_2, \sigma_1, z_1, z_l, q\char`\', z\char`\', d, \omega)}=0
				\end{equation}
			\end{itemize}	
			\item Separability condition II:
			\begin{equation}
			\smashoperator{\sum\limits_{(q\char`\', z\char`\', \omega)\in Q\times \Sigma_{\tau} \times X}^{\forall (q_1, \sigma_1, z_1, z_l), (q_2, \sigma_2, z_1, z_l) \in Q \times \Sigma \times \Sigma_q \times \Sigma_q}}{\overline{\delta(q_1, \sigma_1, z_1, z_l, q\char`\', z\char`\', \rightarrow, \omega)}~ \delta(q_2, \sigma_2, z_1, z_l, q\char`\', z\char`\', \uparrow, \omega)}=0
			\end{equation}
			\item Separability condition III: 
			\begin{itemize}
				\item	
				\begin{equation}
				\smashoperator{\sum\limits_{\substack{(q\char`\', z\char`\', \epsilon)\in Q\times \Sigma_{\tau} \times X, \\ (q\char`\', z\char`\',  \omega)\in Q\times \Sigma_{\tau} \times X}}^{\forall (q_1, \sigma_1, z_1, z_l), (q_2, \sigma_2, z_1, z_l) \in Q \times \Sigma \times \Sigma_q \times \Sigma_q, \forall d_1, d_2 \in D: d_1 \neq d_2}}{\overline{\delta(q_1, \sigma_1, z_1, z_l, q\char`\', z\char`\', d_1, \epsilon)}~ \delta(q_2, \sigma_2, z_1, z_l, q\char`\', z\char`\', d_2, \omega)}=0 
				\end{equation}
				\item  \begin{equation}
				\smashoperator{\sum\limits_{\substack{(q\char`\', z\char`\', \omega)\in Q\times \Sigma_{\tau} \times X, \\ (q\char`\', z\char`\', \epsilon)\in Q\times \Sigma_{\tau} \times X}}}{\overline{\delta(q_1, \sigma_1, z_1, z_l, q\char`\', z\char`\',  d_1, \omega)}~ \delta(q_2, \sigma_2, z_1, z_l, q\char`\', z\char`\', d_2, \epsilon)}=0 
				\end{equation}
				\item \begin{equation}
				\smashoperator{\sum\limits_{\substack{(q\char`\', z\char`\', d, \epsilon)\in Q\times \Sigma_{\tau} \times D \times X, \\ (q\char`\', z\char`\', d, \omega)\in Q\times \Sigma_{\tau} \times D \times X}}^{\forall (q_1, \sigma_1, z_1, z_l), (q_2, \sigma_2, z_1, z_l) \in Q \times \Sigma \times \Sigma_q \times \Sigma_q}}{\overline{\delta(q_1, \sigma_1, z_1, z_l, q\char`\', z\char`\', d, \epsilon)}~ \delta(q_2, \sigma_2, z_1, z_l, q\char`\', z\char`\', d, \omega)}=0
				\end{equation}
			\end{itemize}	
		\end{enumerate}
	\end{itemize}
\end{defn}

To process the input string by $M_Q$, we assume that input string \textit{x} is written on input tape enclosed with both end-markers such as \textit{\#x\$}. It processes the input tape employed with queue which is potentially infinite on the right-side. The automaton is  in the state \textit{q}, R/W head is above the symbol $\sigma$. Then, $M_Q$ with the amplitude $\delta(q, \sigma, z_1, z_l, q\char`\', z\char`\', d, \omega)$, where $z_1$, $z_l$ are the symbols at the front and end of queue respectively. It moves to state $q\char`\'$, $d\in \{\leftarrow, \uparrow, \rightarrow\}$ moves the R/W head one cell towards left, stationary and in right direction, and puts the symbol $z\char`\'$ at the end of the queue. The automaton $M_Q$ for processing an input \textit{x} corresponds a unitary evolution in the inner-product space $H_n$. Definition 1 utilizes the concept of deterministic queue automata \cite{14, 25} and quantum pushdown automata \cite{9, 11}.

A computation of QQA $M_Q$ is a sequence of superpositions $c_0, c_1, c_2, ...,$  where $c_0$  is an initial configuration. When the automaton is observed in a superposition state, for any $c_i$, it has the form $U_\delta \ket{c_i} = \sum_{c \in C_n}{\alpha_c \ket{c_i}}$, where \textit{C} defines the set of configurations, and the configuration $c_i$ is measured with the probability $\alpha_c$ \cite{5}. Superposition is valid; if the sum of the absolute squares of their amplitudes is unitary.

Time evolution of quantum systems is given by unitary transformations. Suppose if the system is in $\ket{\psi}$, then at a later time it will be $\ket{\psi^{'}}=\hat{U} \ket{\psi}$, where $\hat{U}$ is unitary time evolution operator. If \textit{} is any linear transformation, then it will be unitary transformation if $\overline{U}U=I$ or $U\overline{U}=I$, where $\overline{U}$ is a conjugate transpose of \textit{U}. Therefore, evolution operator specifies how QQA $M_Q$ will progress the input string. Each transition function $\delta$ induces a linear time evolution operator over the space $H_n$. Let $\sigma_1, \sigma_2, ..., \sigma_n \in \Sigma ~\text{and}~ z_1, z_2, ..., z_l  \in \Sigma_q ~\text{for}~ l \geq 1$. For any configuration $c=\ket{\sigma_1q\sigma_2\sigma_3...\sigma_n, z_1 z_2...z_l}$, the evolution of a QQA $M_Q$ is given by the linear operator $U_\delta$ such that
\begin{equation}
U_\delta \ket{c}= \sum\limits_{(q\char`\', z\char`\', d, \omega)\in Q\times \Sigma_{\tau} \times D \times X}{\delta(q, \sigma_2, z_1, z_l, q\char`\', z\char`\', d, \omega) \ket{f (\ket{c}, d, q\char`\'), z\char`\', \omega}}
\end{equation}
where $(q, \sigma, z_1, z_l) \in Q \times \Sigma \times \Sigma_q \times \Sigma_q$ and 
\begin{equation}
f(\ket{\sigma_1q\sigma_2...\sigma_n, z_1z_2...z_l z\char`\'}, d, q\char`\')= 	\left\{
\begin{array}{ll}
q\char`\'\sigma_2,z_1z_2...z_l z\char`\'  ~ \text{if} ~ d= \uparrow \\
q\char`\'\sigma_3, z_1z_2...z_l z\char`\'  ~ \text{if} ~ d= \rightarrow \\
q^{'}\sigma_1, z_1z_2...z_l z\char`\'   ~  \text{if} ~ d= \leftarrow \\
\end{array} \right \} 
\end{equation}

When R/W head is stationary in equation (10), then the resultant state is reading the same symbol on input tape and enqueue the symbol $z\char`\'$ at the end of queue. The resultant state $q\char`\'$ reads the next symbol on moving the R/W head towards right. Further, the resultant state reads the $\sigma_1$ on moving towards the left direction and puts the symbol $z\char`\'$ at the rear of the queue. 
\begin{thm}
	Well-formedness conditions 1.1 are satisfied iff the time evolution operator $U_\delta$ is unitary. 
\end{thm}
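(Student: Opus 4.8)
The plan is to invoke the standard criterion that a linear operator $U_\delta$ on the configuration space $H_n$, which carries the orthonormal basis $\{\ket{c}\}_{c\in C_n}$ of configurations, is unitary precisely when the image family $\{U_\delta\ket{c}\}_{c\in C_n}$ is again orthonormal. Concretely, I would reduce the whole statement to the single requirement
\[
\braket{U_\delta c_1 \mid U_\delta c_2} \;=\; \begin{cases} 1 & \text{if } c_1 = c_2,\\ 0 & \text{if } c_1 \neq c_2,\end{cases}
\]
for all ordered pairs of basis configurations, and show that this requirement is logically equivalent to the conjunction of the well-formedness conditions (a)--(e). (For the forward implication one also notes that $U_\delta$ is onto $H_n$ once it is a norm-preserving injection, because every condition is an equality, not an inequality, so no amplitude mass is lost.) Thus the theorem becomes a bookkeeping statement: each inner product, expanded via the definitions in equations (10)--(11), is a finite sum of products $\overline{\delta(q_1,\sigma_1,z_1,z_l,q',z',d,\omega)}\,\delta(q_2,\sigma_2,z_1,z_l,q',z',d,\omega)$ running over those tuples $(q',z',d,\omega)$ whose two branches collide on a common target configuration, and I must match each pattern of collisions to exactly one of (a)--(e).

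First I would handle the diagonal case $c_1=c_2$. Expanding $U_\delta\ket{c}$ by (10), every summand is indexed by a distinct tuple $(q',z',d,\omega)$, and distinct tuples yield distinct target configurations, since the target records the new control state $q'$, the new head position forced by $d$ via $f$, and a queue that either ends in the freshly enqueued $z'$ or has its front symbol removed; hence the cross terms vanish and $\braket{U_\delta c \mid U_\delta c} = \sum_{(q',z',d,\omega)}\lvert\delta(\dots)\rvert^2$. Setting this to $1$ is verbatim the local probability condition (a).

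The hard part will be the off-diagonal case $c_1 \neq c_2$, where I must enumerate precisely when a branch of $U_\delta\ket{c_1}$ and a branch of $U_\delta\ket{c_2}$ land on the same configuration. If $c_1$ and $c_2$ already differ in the unread input suffix beyond the scanned cell, or in a part of the queue that no transition reads or writes, no collision is possible and the inner product is identically zero; so the only pairs that matter are those reading the same front symbol $z_1$ and rear symbol $z_l$ of the queue. Among those, the ``aligned'' pairs $c_1=\ket{\sigma_1 q_1 \sigma\cdots,\, z_1\cdots z_l}$, $c_2=\ket{\sigma_1 q_2 \sigma\cdots,\, z_1\cdots z_l}$ (same scanned symbol $\sigma$, states $q_1\neq q_2$) produce the orthogonality condition (b), while the remaining, subtler collisions arise when one branch enqueues ($\omega$) and the other dequeues ($\epsilon$), or when the two branches use different head directions, yet the maps $f$ together with the queue update still coincide. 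Tracking these gives the three separability conditions: condition I (c) for branches that read the same symbol $\sigma_1$ and differ only through the enqueue/dequeue choice (the two sub-equations distinguishing whether the partner branch also uses $\epsilon$ or uses $\omega$); condition II (d) for branches that match after an enqueue but disagree on whether the head stays put or moves right; and condition III (e) for branches differing in the head direction \emph{and} the queue operation. I expect most of the write-up to be spent on this enumeration, and on verifying that the degenerate cases --- an empty queue ($\perp$), a length-one queue with $z_1=z_l$, and the head scanning an end-marker $\#$ or $\$$ --- do not spawn extra collisions and do not break the matching; this is where sign or indexing slips would hide.

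Finally, for necessity I would run the computation in reverse: unitarity gives $\braket{U_\delta c_1 \mid U_\delta c_2} = \delta_{c_1 c_2}$ for every pair, and by instantiating $c_1,c_2$ as the specific configuration pairs isolated above --- one family for each choice of states, scanned symbols, and queue endpoints, and for each admissible collision pattern --- one reads off each of (a)--(e) in turn. Since the configuration space is rich enough to realize every such pair with arbitrary queue endpoints $z_1,z_l$ and arbitrary continuations, no condition is vacuous, and the equivalence is tight. With the collision analysis fixed, both the diagonal computation and this converse direction are short.
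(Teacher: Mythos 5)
Your overall strategy --- reduce unitarity to orthonormality of the image family $\{U_\delta\ket{c}\}_{c\in C_n}$, identify the diagonal case with the local probability condition (a), and classify the off-diagonal collision patterns so that each pattern is matched to exactly one of (b)--(e) --- is the same route the paper takes, and your assignment of collision types to conditions (same scanned symbol with distinct states to (b), enqueue-versus-dequeue with the same scanned symbol to (c), right-versus-stationary head to (d), mixed head direction and queue operation to (e)) agrees with the paper's reading of its own conditions.

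The genuine gap is your parenthetical claim that $U_\delta$ is onto $H_n$ ``once it is a norm-preserving injection, because every condition is an equality, not an inequality, so no amplitude mass is lost.'' The configuration space of a queue automaton is infinite-dimensional (queue contents and the unread input are unbounded), and on an infinite-dimensional Hilbert space an isometry need not be surjective: the unilateral shift maps an orthonormal basis to an orthonormal set, preserves every norm, and loses no amplitude mass, yet is not unitary. Conditions (a)--(e) certify only that the columns of the evolution matrix are orthonormal, i.e., that $U_\delta$ is an isometry; they do not by themselves deliver surjectivity, so your forward implication does not close. The paper is in fact aware of this obstruction: its own proof exhibits a one-state QQA whose evolution matrix columns are orthonormal for each configuration ``but the evolution is not unitary,'' which is precisely a counterexample to the step you assert. (This also means the theorem read literally as an ``iff'' is not established by the paper's argument either; what both arguments actually prove is the equivalence of the well-formedness conditions with orthonormality of the columns of $U_\delta$.) To repair your proof you would have to either weaken ``unitary'' to ``isometric'' in the statement, or supply a separate surjectivity argument that does not follow from the listed conditions.
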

\begin{proof}
	For each input \textit{x}, $U_\delta$ is unitary iff the vectors $U_\delta \ket{c}$ for $c \in C_x$ of the QQA evolution matrix are orthonormal. Condition (a) i.e. local probability condition is satisfied to the statement that $\lVert U_\delta \ket{c}\rVert=1$ for each $c \in C_x$ configuration Correspondingly, the column vectors of the QQA evolution matrix are orthogonal iff Condition (b) i.e. orthogonality condition is satisfied such that $U_\delta \ket{c_1}-U_\delta \ket{c_2}$, where $c_1=\ket{q_1 \sigma, z_1 z_2...z_l}, c_2=\ket{q_2 \sigma, z_1 z_2...z_l}$ for $q_1 \neq q_2$. Separability conditions (c, d, e) are satisfied in equivalent to the above statement. Separability condition d is satisfied such that $U_\delta \ket{c_1}-U_\delta \ket{c_2}$, in which states $q_1$ and $q_2$ are reading the different symbols and resultant state is same, where $\{c_1=\ket{q_1 \sigma_1, z_1 z_2...z_l}, c_2=\ket{q_2 \sigma_2, z_1 z_2...z_l}\mid c_1, c_2 \in C_x\}$, where R/W head moves right and remains stationary for $c_1, c_2$ respectively according to the condition. Thus, the columns of the evolution matrix are orthogonal for each input \textit{x} iff conditions (b, c, d, e) are satisfied. We can say that, if $U_\delta$ is a unitary operator, transition function $\delta$ satisfies the well-formedness conditions 1.1.

	It is not an easy task to check all the well-formedness conditions for trivial QQA (i.e. there is no other state that it can exist in). Consider a QQA, whose evolution matrix columns are orthonormal for each configuration, but the evolution is not unitary. $Q=\{q\}, \Sigma=\{a\}, \Sigma_q=\{A_0\}$ and transition function $\delta$ is defined as: $\delta(q, \#, A_0, A_0, q, A_0, \rightarrow, \omega), \delta(q, a, A_0, A_0, q, A_0, \rightarrow, \omega)=1, \delta(q, \$, A_0, A_0, q, \$, \rightarrow, \omega)=1$. The other values of the transitions $\delta=0$. Therefore, we have defined simple notation of QQA similar as 2QFA and QPA, by which well-formed machines can be more easily specified. The method is to decompose the transition function into transforming of states with queue automata operations (enqueue, dequeue) and other head functions.
\end{proof}

\begin{defn}
	A QQA is simplified, for each $\sigma \in \Sigma$, if there exists a function $D: Q \rightarrow \{\leftarrow, \uparrow, \rightarrow\}$ on the inner product space $L_2(Q) \rightarrow L_2(Q)$ such that  where \textit{Q} is the set of states, $X \in \{\epsilon, \omega\}$. Define transition function as	
	\begin{equation}
	\left\{ \begin{array}{l}
	\varphi(q, \sigma, z_1, z_l, q\char`\', z\char`\', \epsilon)= \delta(q, \sigma, z_1, z_l, q\char`\', z\char`\', D(q\char`\'), \epsilon)
	\\
	0 \end{array}
	\middle\vert\;
	\begin{array}{@{}l@{}}
	\text{if} ~ D(q\char`\')=d \\
	\text{else}
	\end{array}
	\right\}
	\end{equation}
	where a state \textit{q} results in to $q\char`\'$ on reading $\sigma$, dequeue a symbol $z_1$ from head and puts the $z^{'}$ at end of the queue.
\end{defn}
\begin{thm}
	A simple QQA satisfies the well-formedness conditions 1.1 if there exists a transition function $\varphi(q, \sigma, z_1, z_2, q\char`\', z\char`\', \omega)$ for any $\sigma \in \Sigma$, $D: Q \rightarrow \{\leftarrow, \uparrow, \rightarrow\}$ on the inner product space $L_2(Q) \rightarrow L_2(Q)$ and $X \in \{\epsilon, \omega\}$ such that 
	\begin{equation}
	\smashoperator[r]{\sum_{(q\char`\', z\char`\', \omega)\in  Q\times \Sigma_{\tau} \times X}^{\forall (q_1, \sigma, z_1, z_l), (q_2, \sigma, z_1, z_l)~ \in ~ Q \times \Sigma \times \Sigma_q \times \Sigma_q }} {\overline{\varphi(q_1, \sigma, z_1, z_l, q\char`\', z\char`\', \omega)}~ \varphi(q_2, \sigma, z_1, z_l, q\char`\', z\char`\', \omega)}= \left\{ \begin{array}{l}
1 \\
0 \end{array}
\middle\vert\;
\begin{array}{@{}l@{}}
q_1=q_2 \\
q_1 \neq q_2
\end{array}
\right\}
\end{equation}
\end{thm}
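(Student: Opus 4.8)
The plan is to bootstrap from Theorem~1. That result already identifies the well-formedness conditions~1.1 with unitarity of the evolution operator $U_\delta$, i.e.\ with the statement that the column vectors $\{U_\delta\ket{c} : c\in C_x\}$ form an orthonormal family for every input $x$. So it is enough to show that, for a simplified QQA in the sense of Definition~3, this orthonormality collapses exactly onto the single identity in the theorem statement.

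First I would fix $\sigma\in\Sigma$ and boundary queue symbols $z_1,z_l$ and unpack $U_\delta\ket{c_i}$ for a reading-configuration $c_i=\ket{q_i\sigma,z_1z_2\cdots z_l}$. The content of Definition~3 is that in a simplified machine the head direction is no longer a free transition parameter: it is pinned to $D(q')$, where $q'$ is the target state, while the queue side --- dequeue $z_1$ from the front, enqueue $z'$ at the rear --- is carried by $\varphi$. Consequently the only genuine summation indices left in $U_\delta\ket{c_i}$ are the pair $(q',z')$, and the image configuration $f(\ket{c_i},D(q'),q')$ together with the updated queue content is completely determined by that pair. Hence $\langle U_\delta c_1 \mid U_\delta c_2\rangle$, with $c_1=\ket{q_1\sigma,z_1\cdots z_l}$ and $c_2=\ket{q_2\sigma,z_1\cdots z_l}$, collapses to $\sum_{(q',z',\omega)}\overline{\varphi(q_1,\sigma,z_1,z_l,q',z',\omega)}\,\varphi(q_2,\sigma,z_1,z_l,q',z',\omega)$, which is exactly the left-hand side of the stated identity.

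Then I would check conditions~(a)--(e) of Definition~2 one at a time against this collapse. The local probability condition~(a), together with the diagonal ($q_1=q_2$) instances of the orthogonality and separability conditions, yields the ``$=1$'' branch; the off-diagonal orthogonality condition~(b) --- same $\sigma$, same $z_1,z_l$, $q_1\neq q_2$ --- yields the ``$=0$'' branch. For the separability conditions~I--III the key observation is that any surviving cross term must pair two transitions with a common target state $q'$, hence a common head direction $D(q')$; therefore every term whose two factors carry different directions $d_1\neq d_2$, and --- because $\varphi$ fuses the dequeue and enqueue actions --- every term mixing incompatible queue operations $\epsilon$ and $\omega$, contains a vanishing factor and drops out. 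What remains of~(c),(d),(e) is again precisely the stated identity. This shows $U_\delta$ is unitary, and by Theorem~1 the simplified QQA is well-formed; running the same equalities backwards gives the converse implication.

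The main obstacle is the bookkeeping in the separability conditions: one has to pin down exactly which parameters of $\delta$ are frozen by the simplification --- the head move through $D$, and the dequeue/enqueue pairing through $\varphi$ --- and argue cleanly that the instances of~(c),(d),(e) involving $d_1\neq d_2$ or a mismatched $\epsilon/\omega$ are vacuous on the support of $\varphi$, rather than imposing extra constraints. Once that reduction is secured, verifying that the stated identity yields orthonormal columns is routine linear algebra, and no new estimates are needed.
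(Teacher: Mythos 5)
Your proposal follows essentially the same route as the paper: both arguments reduce the full well-formedness conditions to the single $\varphi$-orthonormality identity by observing that in a simplified QQA the head direction is frozen to $D(q')$, so the sum over $(q',z',d,\omega)$ collapses to a sum over $(q',z',\omega)$, which is then identified with orthonormality of the columns of the evolution operator (the paper's Theorem~1). Your treatment is somewhat more explicit than the paper's --- in particular in arguing that the cross terms of the separability conditions with $d_1\neq d_2$ or mismatched $\epsilon/\omega$ vanish on the support of $\varphi$ --- but the underlying decomposition and key step are the same.
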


\begin{proof}
Firstly re-write the well-formedness conditions:
\begin{equation*}
\begin{split}
& \sum\limits_{(q\char`\', z\char`\', d, \omega)\in Q\times \Sigma_{\tau} \times D \times X }{\overline{\delta(q_1, \sigma, z_1, z_l, q\char`\', z\char`\', d, \omega)}~ \delta(q_2, \sigma, z_1, z_l, q\char`\', z\char`\', d, \omega)}=  \\
&  \sum\limits_{(q\char`\', z\char`\', d, \omega)\in Q\times \Sigma_{\tau} \times D \times X }{\overline{\delta(q_1, \sigma, z_1, z_l, q\char`\', z\char`\', D(q\char`\'), \omega)}~ \delta(q_2, \sigma, z_1, z_l, q\char`\', z\char`\', D(q\char`\'), \omega)}= \\
\end{split}
\end{equation*}
\begin{equation}
\smashoperator[r]{\sum_{(q\char`\', z\char`\', \omega)\in  Q\times \Sigma_{\tau} \times X}^{\forall (q_1, \sigma, z_1, z_l) , (q_2, \sigma, z_1, z_l)~ \in ~ Q \times \Sigma \times \Sigma_q \times \Sigma_q }} {\overline{\varphi(q_1, \sigma, z_1, z_l, q\char`\', z\char`\', \omega)}~ \varphi(q_2, \sigma, z_1, z_l, q\char`\', z\char`\', \omega)}= \left\{ \begin{array}{l}
1 \\
0 \end{array}
\middle\vert\;
\begin{array}{@{}l@{}}
q_1=q_2 \\
q_1 \neq q_2
\end{array}
\right\}
\end{equation}
We can see that $M_Q$ is well-formed iff the above condition is satisfied, for every $\sigma \in \Sigma$ and $X \in \{\epsilon, \omega\}$. The local probability condition (a) is satisfied iff the columns of every transition $\sigma$ are normalized (i.e. length 1) for each enqueue and dequeue operation. Similarly, the columns vectors of transition are orthogonal iff the orthogonality condition (b) and separability conditions (c, d, e) are satisfied. Equivalently, $M_Q$  is well-formed when every transition is unitary for $\sigma \in \Sigma$ and $X \in \{\epsilon, \omega\}$.
\end{proof}

The definition of a real-time quantum queue automaton (rtQQA) is same as QQA. The only difference lies in the movement of R/W head. It is only allowed to move towards the right direction Thus, every step rtQQA reads a new symbol. On reading the right-end marker \$, the computation is finished. The input string is said to be recognized by rtQQA if the R/W reads the right-end marker \$ and queue is empty. Thus, rtQQA accepts the language \textit{L}, if  it takes time not more than $|w|$ for every word $w \in L$.

\subsection{Language recognition}
We assume that QQA has to be observed to produce information about its processing. Consider an observable \textit{O} for finite-dimensional Hilbert space $H_n$, which is decomposed into subspaces such as $E_a, E_r, E_n$  refers to the subspace of ‘accept’, ‘reject’ and ‘non-halting’ respectively. Each of these subspaces are traversed by configurations such that $c_a=\{\ket{q \sigma_1 \sigma_2...\sigma_n, z_1z_2...z_l}\in C \mid q \in Q_{acc}\}, c_r=\{\ket{q \sigma_1 \sigma_2...\sigma_n, z_1z_2...z_l}\in C \mid q \in Q_{rej}\}$ and $c_n=\dfrac{C}{(c_a \cup c_r)}$, where $c_n \in C$.

We assume that input string $x \in \Sigma^{*}$ is written on input tape with both end-markers such that \#x\$. It is equipped with a queue (i.e. initially empty). The processing of input string starts with an initial state and R/W points towards the first symbol on input tape. The transition depends upon the symbol under R/W head and the symbols at the front and end of the queue respectively. Firstly, the evolution operator is applied to the current state and computes several paths simultaneously (quantum parallelism); however, as a result of measurement, it is possible to get the results of only one computation path. In meanwhile, each path performs an enqueue or dequeue operations on queue and moves the R/W head corresponding to the resultant state. Then the result is observed by an observable \textit{O}.  Suppose if the automaton is in a superposition state $\ket{\phi}=\alpha_1\ket{x_1}+\alpha_2\ket{x_2}+...+\alpha_n\ket{x_n}$, where $\alpha_i$ are amplitudes and $|\alpha_1|^2+ |\alpha_2|^2+...+|\alpha_n|^2=1$, then the superposition is projected into above-mentioned subspaces $E_j, j\in \{a, r, n\}$. The result is observed randomly, and each result \textit{j} is realized with probability $\lVert \alpha_j\rVert^2$. The result of each observation will be either ‘accept’ or ‘reject’ or ‘non-halting’. The processing remains continue until the observation does not undergo ‘acceptance’ or ‘rejectance’ state. Therefore, when the R/W head reaches at right end of the input tape and queue is empty, then the string is said to accepted otherwise the string is rejected after processing the input string. 

\section{The power of quantum queue automata}
The computational power of automata employed with queue has been widely studied by various researchers. It is known that queue automata and Turing machines have the same computational power  i.e capable of performing universal computations. Kutrib et al. \cite{25} shown the several languages recognized by rtRDQA and proved its closure properties. It has been examined that languages $L_1=\{ba^nca^nb \mid n \geq 0\}~ \text{and} ~ L_2=\{ba^nba^mca^mb \mid m,n \geq 0\}$ can be recognized by some reversible DQA in real-time. But, the union of $L_3=L_1 \cup L_2$ i.e $L_3=\{ba^{n_1}ba^{n_2}ba^{n_3}...ba^{n_i}ca^{n_i}b \mid n_j \geq 0, 1 \leq j \leq i\}$ cannot be recognized by any rtDQA \cite{25}. Further, it has been investigated that $L_{xy}=\{xycyx \mid x \in \{a, b\}^*, y \in \{0, 1\}^*\}$ cannot be recognized by any non-deterministic queue automata in real-time (rtNDQA) \cite{26}. In this Section, we have investigated the computational power of real-time quantum queue automata. Thus, by Theorems 4.1 and 4.2, the real-time quantum queue automata has been shown to outperform its classical variants in the regime of language recognition by imposing same restrictions.

\begin{thm}
A language $L_3=\{ba^{n_1}ba^{n_2}ba^{n_3}...ba^{n_i}ca^{n_i}b \mid n_j \geq 0, 1 \leq j \leq i\}$ can be recognized by real-time quantum queue automata, but cannot be recognized by any deterministic queue automata in real-time.
\end{thm}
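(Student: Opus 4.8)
The claim splits into a negative and a positive half. The negative half --- that no real-time deterministic (or reversible) queue automaton accepts $L_3$ --- is exactly the theorem of Kutrib et al.~\cite{25}, which I would simply invoke, recalling the obstruction behind it: a rtDQA scanning a prefix $ba^{n_1}b\cdots ba^{n_j}$ cannot tell whether block~$j$ is the last one, so to be safe it must retain information about every block, yet with no $\lambda$-steps it can neither erase the contribution of blocks $1,\dots,i-1$ once the separating $c$ finally arrives nor keep a single FIFO queue simultaneously ``primed to compare block~$j$'' and ``still recording block~$j+1$''; a counting/fooling argument (carried out in~\cite{25}) makes this rigorous. So the whole work is to exhibit a real-time quantum queue automaton for $L_3$, in the spirit of Scegulnaja's use of quantum parallelism for the real-time quantum Turing machine~\cite{31}.

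For the positive half the plan is as follows. The finite control first checks, on the fly and without touching the queue, that the input has the shape $\#\,(ba^{\ast})^{+}\,c\,a^{\ast}\,b\,\$$, rejecting at once otherwise; this is a regular condition. Running alongside this check is a distinguished \emph{scanning} branch whose queue stays empty (each step enqueues the empty word $\tau$); upon reading each separator $b$ this branch applies a fixed Hadamard-type rotation, continuing as the scanning branch with amplitude $\tfrac1{\sqrt2}$ and spawning a \emph{finalizing} branch with amplitude $\tfrac1{\sqrt2}$ --- the remaining row of the $2\times2$ unitary block being supplied by an auxiliary state that is never reached from $q_0$, as is standard for 2QFA/QPDA-style constructions. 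A finalizing branch enqueues one $a$ for each $a$ of the current block; if it then reads another $b$, its guess was wrong and it moves to a rejecting state; if instead it reads $c$, that block really was the last one, and it enters a comparison phase in which it dequeues one $a$ per input $a$, rejecting on a queue underflow, rejecting if the closing $b$ is met with a nonempty queue, and passing to an accepting state exactly when the closing $b$ is met with an empty queue (after which it reads $\$$ and halts). The scanning branch, if it survives as far as the $c$, simply rejects.

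The design makes the syntax do the selecting: a finalizing branch spawned at the $j$-th separator reaches the comparison phase if and only if $j=i$. Hence for $x=ba^{n_1}\cdots ba^{n_i}ca^{n_i}b\in L_3$ exactly one branch --- the one spawned at the $i$-th separator --- ends in an accepting configuration, and one checks that it does so with probability $2^{-i}$, while every other branch (the $i-1$ wrong finalizers and the scanning branch) is observed rejecting, the probabilities summing to~$1$; and for $x\notin L_3$ either the shape test fails immediately or the unique surviving finalizing branch detects the length mismatch, so the accepting subspace is never entered at all. Thus $M_Q$ accepts with positive probability precisely the strings of $L_3$. Since $i$ is unbounded, no fixed spawning amplitude can keep the correct branch heavy, so the acceptance is one-sided with unbounded error rather than bounded error --- but even this already beats every (errorless) rtDQA, which gives the separation.

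Finally I would certify that $M_Q$ is well-formed, through the simplified-automaton framework above: exhibit the decomposition $\varphi$ of $\delta$ into a state map, a queue operation in $\{\epsilon,\omega\}$, and the head function $D\equiv{\rightarrow}$ (the head always advances), and verify the orthonormality identity for a simplified QQA for each input symbol $\sigma$ and each queue configuration, which by the earlier equivalence is the same as unitarity of $U_\delta$. Every step of $M_Q$ is of one of three kinds --- the $2\times2$ rotation above, a deterministic and backward-deterministic enqueue or dequeue, or a transition into a halting state --- and at any moment at most the scanning branch and one finalizing branch are live, so whenever two live branches read the same symbol over the same queue their target configurations differ in the \emph{state} (a finalizer that has stored nothing versus the scanning branch; a branch in the comparison phase versus anything else) and are automatically orthogonal, while incompatible branches land in distinct rejecting configurations that the observable records before any interference can occur. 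The genuinely fussy point, which I expect to be the main obstacle, is exactly this bookkeeping: laying out the finitely many $(q,\sigma,z_1,z_l)$ cases so that the rotation is applied to the scanning branch only, so that every wrong guess and every shape violation is routed to a reject state instead of leaking amplitude toward acceptance, and so that the $2\times2$ split is completed to an honest unitary; once those cases are tabulated, well-formedness follows by inspection and the proof is complete.
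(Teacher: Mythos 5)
Your proposal follows essentially the same route as the paper: the negative half is delegated to Kutrib et al.~\cite{25}, and the positive half is the same construction --- at each separator $b$ a $1/\sqrt{2}$-amplitude branch guesses that the current block is the last one, loads $a^{n_j}$ into the queue, rejects if another $b$ appears, and otherwise dequeues against the $a$'s after $c$, so that membership is signalled by a branch reaching an accepting configuration with (exponentially small but positive) probability. The one substantive divergence is the fate of the surviving scanning branch: in the paper it proceeds via $q_4$ to an \emph{accepting} state, so well-formed non-members are still accepted with probability up to $2^{-i}$ while members get $2^{1-i}$ (a factor-two cut-point gap at best), whereas you have it reject at $c$, which yields the cleaner one-sided, cut-point-zero separation and makes your explicit acknowledgement of the unbounded-error regime --- and of the unitary-completion bookkeeping the paper leaves implicit --- an improvement rather than a deviation.
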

\begin{proof}
Let $M_{rtQQA}=(Q, \Sigma, \Sigma_q, q_0, Q_{acc},$ $Q_{rej}, \delta)$ be a real-time QQA, $Q=\{q_0, q_1, q_2, q_3, q_4, q_5,$ $Q_{acc}, Q_{rej}\}, \Sigma =\{a, b, c\}, \Sigma_q=\{A, B\}, Q_{acc}=\{q_{acc_1}, q_{acc_2}\}, Q_{rej}= \{q_{rej_1}\}$. The transition function $\delta$ is defined in the manner as described in Section 3. It must be noted that the head moves always towards the right direction on reading a new symbol at each step. The specification of transition functions is defined as follows:

\begin{table} [h]
\centering
\caption{List of transition functions for language $L_3=\{ba^{n_1}ba^{n_2}b...ba^{n_j}ca^{n_j}b\mid n_k\geq 0, 1 \leq k \leq j\}$}. 
\begin{tabular}{ |p{5.2cm}|p{5.2cm}|}
	\hline
	\multicolumn{2}{|c|}{$\varphi(q_0, \#, \perp, \perp)=(q_0, \tau, \omega)$}\\
	\multicolumn{2}{|c|}{$\varphi(q_0, b, \perp, \perp)=\dfrac{1}{\sqrt{2}}(q_1, \tau, \omega)+ \dfrac{1}{\sqrt{2}}(q_2, \tau, \omega)$}\\
	\hline
	$\varphi(q_1, a, \perp, \perp)=(q_1, A, \omega)$ & $\varphi(q_2, a, \perp, \perp)=(q_2, \tau, \omega) $ \\
	\hline
	$\varphi(q_1, a, A, A)=(q_1, A, \omega)$ & $\varphi(q_2, b, \perp, \perp)=(q_1, \tau, \omega) $ \\
	\hline
	$\varphi(q_1, b, A, A)=(q_r, \tau, \omega)$ & $\varphi(q_2, b, \perp, \perp)=(q_2, \tau, \omega) $ \\
	\hline
	$\varphi(q_1, c, A, A)=(q_3, \tau, \omega)$ & $\varphi(q_2, c, \perp, \perp)=(q_4, \tau, \omega) $ \\
	\hline
	$\varphi(q_3, a, A, A)=(q_3, \tau, \epsilon)$ & $\varphi(q_4, a, \perp, \perp)=(q_4, \tau, \omega) $ \\
	\hline
	$\varphi(q_3, a, \perp, \perp)=(q_{rej_1}, \tau, \epsilon)$ & $\varphi(q_4, b, \perp, \perp)=(q_4, \tau, \omega) $ \\
	\hline
	$\varphi(q_3, b, A, A)=(q_{rej_1}, \tau, \epsilon)$ & $\varphi(q_3, b, \perp, \perp)=(q_5, \tau, \omega) $ \\
	\hline
	$\varphi(q_5, \$, \perp, \perp)=(q_{acc_1}, \tau, \omega)$ & $\varphi(q_4, \$, \perp, \perp)=(q_{acc_2}, \tau, \omega)$  \\
	\hline
\end{tabular}
\end{table}
In Table 1, transition functions are applicable in the case where $M_{rtQQA}$ is in state $q \in Q$
and R/W is above the symbol $\sigma \in \Sigma$ and  $z_1$, $z_l$ are the symbols at the front and end of queue respectively are represented as:
\begin{equation}
\varphi(q, \sigma , z_1, z_l)=\sum\limits_{(q\char`\', z\char`\', \omega)\in Q\times \Sigma_{\tau} \times X }{\varphi(q, \sigma, z_1, z_l, q\char`\', z\char`\', \omega)(q\char`\', z\char`\', \omega)}
\end{equation} 

$M_{rtQQA}$ starts by splitting into two computational paths. Each path possesses equal amplitude $1/\sqrt{2}$. In the first path, state $q_1$ reads \textit{a} and empty queue symbol $\perp$, then it enqueues symbol \textit{A} at the rear of queue and moves the R/W towards right. This process continues and state remains same. On reading symbol \textit{b}, the $M_{rtQQA}$ change its state to rejectance state $q_r$. The state $q_1$ is changed into $q_3$ on reading a symbol \textit{c} from input tape and symbol \textit{A} is at the front and end of queue. Further, it starts dequeue each \textit{A} from front of queue on reading \textit{a} from tape. If in case, state $q_3$ reads \textit{a} and queue gets empty or reads \textit{b} symbol and symbol \textit{A} is at the front and end of queue, then it goes to rejectance state $q_{rej_1}$. Otherwise, the state $q_3$ is changed in to $q_5$ on reading \textit{b} and empty queue symbol $\perp$. \\

While in the second path, R/W keeps moving towards right of the input tape and neither enqueue nor dequeue any symbol from the queue. Whenever a state $q_2$ reads a symbol \textit{b}, it splits the computation into two paths, where first path follows the above procedure and other path follows the loop. But, on reading state a symbol \textit{c} and empty queue, state $q_2$ is changed into state $q_4$ and moves to the right. If the first path finds the difference in number of \textit{a}'s before the symbol \textit{c} and after it, then it goes to rejectance state. Otherwise, on reading symbol \$ from the input tape and empty queue symbol, the working states $q_5$ and $q_4$ go to the accepting states $q_{acc_1}$ and $q_{acc_2}$ respectively. At the end, the total amplitude can be written as the product of the amplitudes associated with each subpath. Thus, the number of subpaths depends upon the number of \textit{b}'s occur before the symbol \textit{c} in second path.  If the input string \textit{w} $\in L_3$, then both the paths reads the right-end marker \$ i.e the string is said to be accepted with probability at most 1. If \textit{w} $\notin L_3$, then it is rejected by one of the path and the input string is said to be rejected with probability at most 1/2. Thus, the inputs which are not in $L_1$ are accepted with probability at most $1/2^i$, where \textit{i} depends upon the number of \textit{b}'s occur before the symbol \textit{c}. For instance, consider an input string written on input tape as \textit{\#bcb\$} enclosed with both end markers. On reading the first \textit{b}, the computation is split into two paths with $\dfrac{1}{\sqrt{2}}$. In both paths, state $q_1$ and state $q_2$ reads the next symbol \textit{c} with an empty queue and changed into state $q_3$ and $q_4$ respectively. Finally, the both paths go to the accepting states and string is said to be accepted with probability 1. If the input string is taken as \textit{w}=\textit{\#bacb\$} i.e. $w \notin L_3$, then it is rejected with probability 1/2. 
\end{proof}

\begin{thm}
There exists a language $L_{xy}=\{xycyx \mid x \in \{a, b\}^*, y \in \{0, 1\}^*\}$, that can be recognized by real-time quantum queue automata, but cannot be recognized by non-deterministic queue automata in real-time.
\end{thm}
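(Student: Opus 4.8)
The plan is to build an explicit real-time quantum queue automaton $M_{rtQQA}$ for $L_{xy}$; the negative half needs no new work, since it is already established in \cite{26} that no rtNDQA recognizes $L_{xy}$, so I would simply cite that. The conceptual reason $L_{xy}$ defeats a classical real-time queue machine is the FIFO clash: after scanning the prefix $xy$ a single queue necessarily holds $x$ in front of $y$, while the suffix of $w=xycyx$ presents $y$ before $x$, and neither reordering nor a nondeterministic guess fixes this in real time. A nondeterministic machine can at best guess which of the two equalities ($x$ against the trailing $\{a,b\}$-block, or $y$ against the trailing $\{0,1\}$-block) to check, and acceptance ``$\exists$ good path'' then realises the \emph{disjunction} of the two tests, which is the wrong language. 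The quantum machine instead splits, already at the first step, into two equal-amplitude branches that each perform one of the two matchings, so that the measurement statistics realise the \emph{conjunction} of the tests.

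Concretely, take $\Sigma=\{a,b,0,1,c\}$ and $\Sigma_q=\Sigma\cup\{\#,\$\}$, and on the left end-marker apply $\varphi(q_0,\#,\perp,\perp)=\tfrac{1}{\sqrt2}(q_1,\tau,\omega)+\tfrac{1}{\sqrt2}(q_2,\tau,\omega)$, sending amplitude $1/\sqrt2$ into branch~1 (state $q_1$) and $1/\sqrt2$ into branch~2 (state $q_2$); thereafter each branch is deterministic and uses a private set of states. Write an input of admissible shape as $x_1\,y_1\,c\,y_2\,x_2$ with $x_i\in\{a,b\}^*$, $y_i\in\{0,1\}^*$. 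Branch~1 passes over $x_1$ without modifying the queue, enqueues $y_1$ symbol by symbol, passes over the $c$, then on the trailing $\{0,1\}$-block dequeues and compares symbol by symbol against the stored $y_1$, and finally passes over $x_2$; it reaches $q_{acc_1}$ with empty queue on $\$$ exactly when the two $\{0,1\}$-blocks coincide in length and contents and the overall block pattern $\{a,b\}^*\{0,1\}^* c\{0,1\}^*\{a,b\}^*$ with a single $c$ is respected, and enters a rejecting state otherwise. Branch~2 is the mirror image: it enqueues $x_1$, passes over $y_1\,c\,y_2$ without touching the queue, then dequeues and compares against the trailing $\{a,b\}$-block, reaching $q_{acc_2}$ with empty queue on $\$$ exactly when $x_1=x_2$ and the pattern is respected. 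Detecting the block boundaries is a finite-control matter (a symbol of $\{0,1\}$ ends the leading $x$-part; a symbol of $\{a,b\}$ after the $c$ ends the trailing $y$-part), and ``queue already empty'' versus ``queue nonempty'' is read off from the $z_1=\perp$ case of $\varphi$, which lets each branch reject length mismatches; every move is right-moving, so $M_{rtQQA}$ is real-time in the sense of Section~3.

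For correctness: if $w=xycyx\in L_{xy}$ both branches succeed, so the two (distinct, since the branches share no state) accepting configurations $q_{acc_1}$ and $q_{acc_2}$ each carry probability $|1/\sqrt2|^2=\tfrac12$, and $w$ is accepted with probability $1$. If $w\notin L_{xy}$ then either the block pattern is wrong, whence both branches reject and the acceptance probability is $0$, or the pattern is right but $y_1\neq y_2$ (branch~1 rejects) or $x_1\neq x_2$ (branch~2 rejects); in every such case at most one branch accepts, so the acceptance probability is at most $\tfrac12$. Hence $L_{xy}$ is recognized with the same one-sided gap (members accepted with probability $1$, non-members with probability at most $\tfrac12$; cutpoint $\tfrac34$, isolated) used in Theorem~4.1, and together with \cite{26} this gives the claimed separation.

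The main obstacle is not the logic of the two tests but verifying that $M_{rtQQA}$ is well-formed. Because the only genuine superposition is the balanced split at $\#$ and the two branches evolve deterministically over disjoint state sets, for $c$ lying in different branches the images $U_\delta\ket{c}$ are supported on disjoint configurations and hence orthogonal, while the splitting column has squared norm $(1/\sqrt2)^2+(1/\sqrt2)^2=1$; what remains is to check that inside each branch no two distinct source configurations are sent to a common successor, so that the relevant columns of the evolution matrix stay orthonormal, and to cast all transitions in the simplified form of Definition~3, after which Theorem~1.2 certifies that $U_\delta$ is unitary. This is routine but must be carried out branch by branch with attention to the $z_1=\perp$ cases and the boundary transitions, exactly as for the machine constructed in the proof of Theorem~4.1.
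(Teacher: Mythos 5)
Your proposal follows essentially the same route as the paper: split at the left end-marker into equal-amplitude branches, one of which uses the FIFO queue to compare the two $\{a,b\}$-blocks and the other the two $\{0,1\}$-blocks, accept only when both tests succeed, and invoke \cite{26} for the impossibility on real-time nondeterministic queue automata. The only (immaterial) difference is that the paper splits three ways with amplitude $1/\sqrt{3}$, sending the third branch directly to a rejecting state, so members are accepted with probability $2/3$ and non-members rejected with probability at least $2/3$, whereas your two-way split accepts members with probability $1$ and non-members with probability at most $1/2$; both yield the same bounded-error separation.
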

\begin{proof}
Let $M_{rtQQA}=(Q, \Sigma, \Sigma_q, q_0, Q_{acc},$ $Q_{rej}, \delta)$ be a real-time QQA, $Q=\{q_0, q_1, q_2, q_3, q_4,$ $ Q_{acc}, Q_{rej}\}, \Sigma =\{a, b, 0, 1\}, \Sigma_q=\{A, B\}, Q_{acc}=\{q_{acc_1}, q_{acc_2}\}, Q_{rej}= \{q_{r}\}$. Each transition in Table 2 is unitary by inspection and the other transitions are impulsive so that the transformations are unitary. For convenience, we have used the symbols $z_1, z_l \in \{A, B\}$ to represent the symbols at the front and rear of the queue. The specification of transition functions is defined as follows:

\begin{table} [h]
\centering
\caption{List of transition functions for language $L_{xy}=\{xycyx \mid x \in \{a, b\}^*, y \in \{0, 1\}^*\}$}
\begin{tabular}{ |p{5.3cm}|p{5.3cm}|}
	\hline
	\multicolumn{2}{|c|}{$\varphi(q_0, \#, \perp, \perp)=\dfrac{1}{\sqrt{3}}(q_1, \tau, \omega)+\dfrac{1}{\sqrt{3}}(q_2, \tau, \omega)+\dfrac{1}{\sqrt{3}}(q_r, \tau, \omega) $} \\
	\hline
	$\varphi(q_1, a, \perp, \perp)=(q_1, A, \omega)$ & $\varphi(q_2, a, \perp, \perp)=(q_2, \tau, \omega)$ \\
	\hline
	$\varphi(q_1, b, \perp, \perp)=(q_1, B, \omega)$ & $\varphi(q_2, b, \perp, \perp)=(q_2, \tau, \omega)$  \\
	\hline
	$\varphi(q_1, a, z_1, z_l)=(q_1, A, \omega)$ & $\varphi(q_2, 0, \perp, \perp)=(q_2, A, \omega)$ \\
	\hline
	$\varphi(q_1, b, z_1, z_l)=(q_1, B, \omega)$ &  $\varphi(q_2, 1, \perp, \perp)=(q_2, B, \omega)$  \\
	\hline
	$\varphi(q_1, 0, z_1, z_l)=( q_1, \tau, \omega)$ & $\varphi(q_2, 0, z_1, z_l)=( q_2, A, \omega)$ \\
	\hline
	$\varphi(q_1, 1, z_1, z_l)=( q_1, \tau, \omega)$ & $\varphi(q_2, 1, z_1, z_l)=( q_2, B, \omega)$ \\
	\hline
	$\varphi(q_1, c, z_1, z_l)=( q_3, \tau, \omega)$ & $\varphi(q_2, c, z_1, z_l)=( q_4, \tau, \omega)$ \\
	\hline
	$\varphi(q_3, 0, z_1, z_l)=( q_3, \tau, \omega)$ & $\varphi(q_4, 0, A, z_l)=( q_4, \tau, \epsilon)$ \\
	\hline
	$\varphi(q_3, 1, z_1, z_l)=( q_3, \tau, \omega)$ &  $\varphi(q_4, 1, B, z_l)=( q_4, \tau, \epsilon)$\\
	\hline
	$\varphi(q_3, a, A, z_l)=( q_3, \tau, \epsilon)$ & $\varphi(q_4, 0, B, z_l)=( q_r, \tau, \epsilon)$ \\
	\hline
	$\varphi(q_3, b, B, z_l)=( q_3, \tau, \epsilon)$ & $\varphi(q_4, 1, A, z_l)=( q_r, \tau, \epsilon)$ \\
	\hline
	$\varphi(q_3, a, B, z_l)=( q_r, \tau, \epsilon)$ & $\varphi(q_4, a, \perp, \perp)=( q_4, \tau, \epsilon)$  \\
	\hline
	$\varphi(q_3, b, A, z_l)=(q_r, \tau, \epsilon)$ & $\varphi(q_4, b, \perp, \perp)=(q_4, \tau, \epsilon)$ \\
	\hline
	$\varphi(q_3, \$, \perp, \perp)=(q_{acc_1}, \tau, \epsilon)$ &  $\varphi(q_4, \$, \perp, \perp)=( q_{acc_2}, \tau, \epsilon)$ \\
	\hline
\end{tabular}
\end{table}
Similarly to Table 1, transitions functions are applicable in the case where $M_{rtQQA}$ is in state $q \in Q$
and R/W is above the symbol $\sigma \in \Sigma$ and  $z_1$, $z_l$ are the symbols at the front and end of queue respectively are represented as:
\begin{equation}
\varphi(q, \sigma , z_1, z_l)=\sum\limits_{(q\char`\', z\char`\', \omega)\in Q\times \Sigma_{\tau} \times X }{\varphi(q, \sigma, z_1, z_l, q\char`\', z\char`\', \omega)(q\char`\', z\char`\', \omega)}
\end{equation}  

For the construction of $M_{rtQQA}$ for language $L_{xy}$, the computation process is split into three paths: one of which goes to the rejectance state $q_r$, and other two paths with states $q_1$ and $q_2$ compare the symbols representing \textit{x} and \textit{y} respectively. In the first path, state $q_1$ puts the symbol \textit{A} or \textit{B} into the queue on reading reads \textit{a} or \textit{b} with an empty queue symbol $\perp$ respectively and keep moving towards the right. On reading $y \in \{0, 1\}^*$, the state $q_1$ neither enqueue nor dequeue any symbol from queue and moves to the right. On reading \textit{c}, the state $q_1$ is changed into $q_3$. The content of the queue remains same on reading \textit{y}. On reading the symbols belong to \textit{x}, it compares the symbols and dequeue \textit{A} for each \textit{a} and symbol \textit{B} for each \textit{b} from the front of queue.  \\
In the second path, state $q_2$ enqueues the symbol \textit{A} or \textit{B} into the queue on reading reads \textit{0} or \textit{1} with an empty queue symbol $\perp$ respectively and keep moving towards the right. On reading $x \in \{a, b\}^*$, the state $q_2$ neither enqueue nor dequeue any symbol from queue and moves to the right. On reading \textit{c}, the state $q_2$ is changed into $q_4$. On reading the symbols belong to \textit{y}, it compares the symbols and dequeue \textit{A} for each \textit{0} and symbol \textit{B} for each \textit{1} from head of the queue. The content of the queue remains same on reading $x \in \{a,b\}^*$ at the end. \\
If any mismatch occurs in path on reading the symbols from input tape and queue, then it goes to the rejectance state. Otherwise, on reading the right-end marker \$, both paths goes to the acceptance states $q_{acc_1}$ and $q_{acc_2}$ respectively. Thus, R/W moves always towards the right direction on reading the symbol and $M_{rtQQA}$ finishes its work on reading the right-end marker \$. \\
If the input string \textit{w} $\in L_{xy}$, then both the paths reads the right-end marker \$ and goes to the accepting states i.e the string is said to be accepted with probability $\dfrac{2}{3}$. If the input string \textit{w} $\notin L_{xy}$, then it is rejected by at least one of the path and \textit{w} is said to be rejected with probability greater than equal to $\dfrac{2}{3}$.   
\end{proof}
\section{Conclusion}

In this paper, we have proposed a quantum variant QQA of classical queue automata. Further, we have introduced a generalized variant of real-time classical queue automata, the real-time quantum queue automata and proved that it is more superior than its classical counterparts. We have shown that there exists a language which can be recognized by real-time quantum queue automata, while it cannot be recognized by any real-time deterministic queue automata. Furthermore, we have shown that there is a language which cannot be recognized by real-time non-deterministic queue automata can be recognized by proposed quantum variant in real-time. Thus, we can conclude that the computational power of real-time queue automata has been increased by using quantum transitions.  

\section*{Acknowledgments}
Amandeep Singh Bhatia was supported by Maulana Azad National Fellowship (MANF), funded by Ministry of Minority Affairs, Government of India. 

\label{sec:test}
\bibliographystyle{elsarticle-num}
\bibliography{sample}

\begin{thebibliography}{10}
\expandafter\ifx\csname url\endcsname\relax
  \def\url#1{\texttt{#1}}\fi
\expandafter\ifx\csname urlprefix\endcsname\relax\def\urlprefix{URL }\fi
\expandafter\ifx\csname href\endcsname\relax
  \def\href#1#2{#2} \def\path#1{#1}\fi

\bibitem{1}
R.~P. Feynman, Simulating physics with computers, International journal of
  theoretical physics 21~(6) (1982) 467--488.

\bibitem{2}
P.~W. Shor, Algorithms for quantum computation: Discrete logarithms and
  factoring, in: Foundations of Computer Science, 1994 Proceedings., 35th
  Annual Symposium on, Ieee, 1994, pp. 124--134.

\bibitem{3}
L.~K. Grover, A fast quantum mechanical algorithm for database search, in:
  Proceedings of the twenty-eighth annual ACM symposium on Theory of computing,
  ACM, 1996, pp. 212--219.

\bibitem{6}
J.~Wang, Handbook of Finite State Based Models and Applications, CRC press,
  2012.

\bibitem{8}
M.~Amano, K.~Iwama, Undecidability on quantum finite automata, in: Proceedings
  of the thirty-first annual ACM symposium on Theory of computing, ACM, 1999,
  pp. 368--375.

\bibitem{4}
C.~Moore, J.~P. Crutchfield, Quantum automata and quantum grammars, Theoretical
  Computer Science 237~(1-2) (2000) 275--306.

\bibitem{5}
A.~Kondacs, J.~Watrous, On the power of quantum finite state automata, in:
  Foundations of Computer Science, 1997. Proceedings., 38th Annual Symposium
  on, IEEE, 1997, pp. 66--75.

\bibitem{9}
M.~Golovkins, Quantum pushdown automata, in: International Conference on
  Current Trends in Theory and Practice of Computer Science, Springer, 2000,
  pp. 336--346.

\bibitem{10}
S.~Gudder, Quantum automata: An overview, International Journal of Theoretical
  Physics 38~(9) (1999) 2261--2282.

\bibitem{11}
D.~Qiu, Quantum pushdown automata, International Journal of Theoretical Physics
  41~(9) (2002) 1627--1639.

\bibitem{28}
A.~Cem~Say, A.~Yakaryilmaz, Quantum counter automata, International Journal of
  Foundations of Computer Science 23~(05) (2012) 1099--1116.

\bibitem{32}
A.~Yakary{\i}lmaz, Superiority of one-way and realtime quantum
  machines, RAIRO-Theoretical Informatics and Applications 46~(4)
  (2012) 615--641.

\bibitem{12}
E.~L. Post, Finite combinatory processes—formulation 1, The Journal of
  Symbolic Logic 1~(3) (1936) 103--105.

\bibitem{13}
A.~Pettorossi, Elements of computability, decidability, and complexity, Aracne
  editrice Srl, 2014.

\bibitem{15}
D.~I. Cohen, D.~I. Cohen, Introduction to computer theory, Vol.~2, Wiley New
  York, 1991.

\bibitem{14}
Z.~Manna, Mathematical theory of computation mcgraw hill, New York 41.

\bibitem{16}
R.~Vollmar, {\"u}ber einen automaten mit pufferspeicherung, Computing 5~(1)
  (1970) 57--70.

\bibitem{24}
F.-J. Brandenburg, Multiple equality sets and post machines, Journal of
  Computer and System Sciences 21~(3) (1980) 292--316.

\bibitem{29}
A.~Cherubini, C.~Citrini, S.~C. Reghizzi, D.~Mandrioli, Qrt fifo automata,
  breadth-first grammars and their relations, Theoretical Computer Science
  85~(1) (1991) 171--203.

\bibitem{30}
S.~Jakobi, K.~Meckel, C.~Mereghetti, B.~Palano, Queue automata of constant
  length, in: International Workshop on Descriptional Complexity of Formal
  Systems, Springer, 2013, pp. 124--135.

\bibitem{25}
M.~Kutrib, A.~Malcher, M.~Wendlandt, Reversible queue automata, Fundamenta
  Informaticae 148~(3-4) (2016) 341--368.

\bibitem{26}
M.~Kutrib, A.~Malcher, M.~Wendlandt, Queue automata: Foundations and
  developments, in: Reversibility and Universality, Springer, 2018, pp.
  385--431.

\bibitem{31}
O.~Scegulnaja, Quantum real-time turing machine, in: International Symposium on
  Fundamentals of Computation Theory, Springer, 2001, pp. 412--415.

\bibitem{17}
S.~Y. Yan, Quantum attacks on public-key cryptosystems, Springer, 2013.

\bibitem{34}
M.~A. Nielsen, I.~Chuang, Quantum computation and quantum information (2002).

\end{thebibliography}
\end{document}